\newcommand{\cg}{\color{gray}}
\newcommand{\R}{\mathbf{R}}
\newcommand{\iq}{\operatorname{iq}}
\newcommand{\diag}{\operatorname{diag}}
\newcommand{\tr}{\operatorname{tr}}
\newcommand{\vol}{\operatorname{vol}}
\newcommand{\area}{\operatorname{area}}
\def\co{\colon\thinspace}
\def\la{\langle}
\def\ra{\rangle}
\newcommand{\rqedhere}{\tag*{\hspace{-1em}\qed}}
\begin{document}
\title*{Efficient and Global Optimization-Based
Smoothing Methods for Mixed-Volume Meshes}
\titlerunning{An Efficient and Global Optimization-Based Smoothing Method}

\author{Dimitris Vartziotis\inst{1,2,3} \texorpdfstring{and}{\and} Benjamin Himpel\inst{1}}

\institute{
  \it TWT GmbH Science \& Innovation, Department for Mathematical Research \& Services,
  Bernh{\"a}user Stra{\ss}e~40--42,
  73765~Neuhausen, Germany \and
  NIKI Ltd.\ Digital Engineering, Research Center,
 205~Ethnikis Antistasis Street, 45500~Katsika, Ioannina, Greece \and
 Corresponding author. E-mail address: dimitris.vartziotis@nikitec.gr
}

%\author{Benjamin Himpel}

%\author[a,b,*]{Dimitris Vartziotis}
%\author[a]{Benjamin Himpel}

%\affil[a]{\small\it
%  TWT GmbH Science \& Innovation, Department for Mathematical Research \& Services,
%  Bernh{\"a}user Stra{\ss}e~40--42,
%  73765~Neuhausen, Germany
%}
%\affil[b]{\small\it
%  NIKI Ltd.\ Digital Engineering, Research Center,
% 205~Ethnikis Antistasis Street, 45500~Katsika, Ioannina, Greece
%}
%\affil[*]{\small\it Corresponding author. E-mail address: dimitris.vartziotis@nikitec.gr}
\linenumbers
\maketitle
%\renewcommand{\thefootnote}{\fnsymbol{footnote}}
%\footnotetext[2]{TWT GmbH Science \& Innovation, Department for Mathematical Research \& Services,
%  Bernh{\"a}user Stra{\ss}e~40--42,
%  73765~Neuhausen, Germany}
%\footnotetext[3]{NIKI Ltd.\ Digital Engineering, Research Center,
%  205~Ethnikis Antistasis Street, 45500~Katsika, Ioannina, Greece}
%\footnotetext[4]{Corresponding author. E-mail address: dimitris.vartziotis@nikitec.gr}
%\renewcommand{\thefootnote}{\arabic{footnote}}
\begin{abstract}Some methods based on simple regularizing geometric element transformations have heuristically been shown to give runtime efficient and quality effective smoothing algorithms for meshes. We describe the mathematical framework and a systematic approach to global optimization-based versions of such methods for mixed volume meshes. In particular, we identify efficient smoothing algorithms for certain algebraic mesh quality measures. We also provide explicit constructions of potentially useful smoothing algorithms.\keywords{smoothing, quality metric, quality measure, finite element method, global optimization, optimization-based method, GETMe}
\end{abstract}

%\begin{AMS}
% 52B70, 58C05, 37C10
%\end{AMS}

%\pagestyle{myheadings}
%\thispagestyle{plain}
%\markboth{DIMITRIS VARTZIOTIS AND BENJAMIN HIMPEL}{THE MEAN VOLUME AS A QUALITY MEASURE}

%\epigraph{Every mathematician has a secret weapon. Mine is Morse theory.}{Raoul Bott}

\section{Introduction}

In the context of the finite element method mesh quality affects numerical stability as well as solution accuracy of this method \cite{StrangFix2008}. The class of geometric element transformation methods (GETMe) consists of mesh smoothing methods based on simple geometric element transformations. In \cite{VartziotisWipperSchwald2009} such a smoothing algorithm is introduced for tetrahedral meshes based on shifting vertices of a tetrahedron by the opposing face normals, normalized to be scaling-invariant. It has been tested numerically and extended to other volume types in a series of papers \cite{VartziotisWipperSchwald2009,VartziotisWipper2011Hex, VartziotisWipper2011Mixed, VartziotisPapadrakakis2013,VartziotisWipperPapadrakakis2013}.

In \cite{VartziotisHimpel2013} we show that the mean volume can be viewed as a quality measure for tetrahedron, pyramid, prism and hexahedron, and that the discretization of its gradient flow is a natural generalization of the aforementioned tetrahedral GETMe algorithm. It enables us to prove that it regularizes certain polyhedron types and therefore is a locally optimization-based method. For tetrahedra, this quality measure is related to the mean ratio quality measure \cite{Knupp2001}. Note that our generalization to other polyhedron types is inherently very different from previous extensions via dual polyhedra to hexahedra, prisms and pyramids \cite{VartziotisWipper2011Hex,VartziotisWipper2011Mixed} as well as the mean ratio criterium for other polyhedra \cite{Knupp2001}. 

Several numerical tests have shown, that the GETMe approach combines the advantages of Laplacian
smoothing variants in terms of runtime efficiency with the mesh
quality effectiveness of a global optimization-based smoothing \cite{VartziotisWipperPapadrakakis2013}. While we have shown in \cite{VartziotisHimpel2013} that our generalizations of the geometric element transformations in \cite{VartziotisWipperSchwald2009} regularize certain polyhedron types, we have not touched upon the subject of global optimization for meshes, which is of particular interest to the meshing community. The main goal of our work is to describe the mathematical framework, which allows us to systematically construct global optimization-based GETMe smoothing methods reminiscent of the transformation in \cite{VartziotisWipperSchwald2009}.  As a result we identify some simple geometric element transformations, which optimize certain algebraic mesh quality measures. This approach allows for rigorous analysis, natural generalizations and problem-specific extensions. We hope, that it will also serve as a basis for new runtime efficient and quality effective smoothing algorithms.

In Sect.\ \ref{sec_framework} we introduce the notions of mesh quality and show, how we can turn the mean volume function into a useful a volume element quality measure, which is equivalent to the mean ratio criterium for tetrahedra. Sect.\ \ref{sec_getme}  describes how the GETMe algorithm in \cite{VartziotisWipperSchwald2009} fits into our mathematical framework. In Sect.\ \ref{sec_optimization_approach} we see, how the mean volume function gives the simple geometric element transformation underlying the tetrahedral GETMe algorithm, and construct other simple geometric element transformations based on quality measures. Most importantly, Theorem \ref{thm_main} shows, when homogeneous degree $d$ vector fields give global optimization-based scaling-invariant smoothing methods. In particular, this gives a way to construct simple geometric element transformations optimizing certain algebraic mesh quality measures. In Sect.\ \ref{sec_generalizations} we generalize our results in various ways. We give concrete 
formulas generalizing the tetrahedral 
geometric element transformation in \cite{VartziotisWipperSchwald2009} to the pyramid, the prism and the hexahedron, and briefly discuss 
the generalization to meshes in arbitrary dimensions, the isoperimetric quotient as a volume element quality measure, whose discretized gradient flow gives yet another GETMe smoothing method, and shape preservation.

\section{The Mathematical Framework}\label{sec_framework}

It is not apparent why a simple geometric transformation for tetrahedra given by shifting vertices by the opposing face normals should yield such a reliable and efficient smoothing algorithm for meshes. An elegant mathematical explanation provides a way to systematically develop and analyze some of these promising GETMe algorithms further. This is the motivation for creating a suitable mathematical framework, which was used in the proof of the regularizing behavior for the transformation in \cite{VartziotisWipperSchwald2009}. Alas, instead of showing that the core geometric transformation in \cite{VartziotisWipperSchwald2009} is a useful smoothing approach, we find other related GETMe algorithms, which might be more promising.

A global optimization-based smoothing method increases in each iteration step the value of a (global) quality function for meshes by repositioning the vertices of a mesh. A word of caution up front for mathematicians: local and global optimization refers to whether a mesh is optimized locally in a neighborhood of a vertex or globally for the entire mesh. In particular, global optimization generally refers to the search for a local optimum of a global quality function. Global optimization-based smoothing methods are known to yield meshes with superior element quality \cite{DiachinKnuppMunsonShontz2006,BrewerDiachinKnuppLeurentMelander2003}. It is a common theme in mathematics, in particular in topology and geometry, that strong mathematical results are based on optimizing functions. The heat equation is one of the most important evolutionary differential equations and can be interpreted as the gradient flow of the Dirichlet energy functional. Perelman observed in his proof of Thurston's 
geometrization conjecture \cite{KleinerLott2008_NotesPereleman}, that one can interpret Hamilton's Ricci flow as a gradient flow. This is also the reason for the favorable properties of discrete Ricci flow introduced in \cite{ChowLuo2003_CombinatorialRicci}, which has successfully been used in combination with conformal geometry to optimize surface parametrization in computer graphics \cite{GuYau2008_ComputationalConformal}. Morse theory \cite{Morse1996_CalculusVariations,Bott1982_LecturesMorse} with all its generalizations and infinite-dimensional manifestations in mathematical physics like Chern--Simons theory \cite{Witten1989_QuantumField}, Yang--Mills theory \cite{AtiyahBott1983_YangMills,Atiyah1988_CollectedWorksGaugeTheories,Rubakov2002_ClassicalTheory} and other gauge theories deduces topological information from the study of the singularities and the flow of a gradient field. Our goal is to find simple geometric element transformations, which give global optimization-based GETMe algorithms, in order 
to 
combine the mathematical advantages of global optimization with the speed of GETMe. In order to be successful, we need to work in a good mathematical framework.

In order to keep the mathematical overhead to a minimum, we adapt the notation from \cite{DiachinKnuppMunsonShontz2006} and refer to \cite{VartziotisHimpel2013} for a complete treatment offered to the interested reader.

\subsection{Element and Mesh Quality}\label{sec_quality}

A mesh consists of an ordered set of $n$ vertices $V$ and $|E|$ volume elements $E$. Let us write $V$ as a tuple $(v_1,\ldots, v_n)$. Let $x_v \in \R^3$ denote the coordinates for the vertex $v \in V$, so that $x = (x_{v_1},\ldots, x_{v_n}) \in \R^{3\times n}$ is the tuple of all vertex
coordinates. Each volume element $e \in E$ consists of an ordered set $V_e$ of $n_e$ vertices of $V$ and the edges between these vertices. If the edges are fixed, then we will simply identify $e = V_e$.  Note that the order of vertices determines a preferred orientation for $e$. Let $x_e \in \R^{3\times n_e}$ be the collection of coordinates for e.
Associated with each element $e$ is an element quality measure
$q_e\co\R^{3\times n_e} \to \R$ as a continuous function of the vertex positions, where a larger value of $q_e$ indicates a higher quality element. The overall quality of a mesh is measured by a continuous function $Q\co \R^{|E|} \to \R$, taking as input the vector of volume element qualities and combining them in a certain way, for example by taking the arithmetic mean, but even combinations depending on the size, location, direction, and other properties are conceivable.

In order to simplify the notation, we call the function
\begin{align*}
q\co \R^{3\times n} &\to \R\;,\\                                                                                                                                                                                                                                                                                                                                                                                                                        
x&\mapsto Q\left(\prod_{e\in E} q_e(x_e)\right)
\end{align*}
the mesh quality measure, where $\prod$ is the Cartesian product. In general, useful quality
measures possess other properties in addition to continuity, like invariance under translation, scaling, rotation and reflection \cite{Knupp2001}. Therefore, let us consider only mesh quality measures, which are invariant under translation and scaling, even though our discussion would be a lot less technical without this assumption. Alternatively, we can consider the coordinates $x \in \R^{3\times n}$ up to scaling and translation. If we disregard the one-point meshes, where all vertex coordinates are equal, the resulting space is simply a $(3n-4)$--dimensional sphere $S^{3n-4} \subset \R^{3n-3}$. We can describe this space concisely as a quotient of a subspace of $\R^{3\times n}$ by the equivalence relation given by translation and scaling \cite[Sect.\ 2]{VartziotisHimpel2013}. This observation allows us to view a translation- and scaling-invariant quality measure $q \co \R^{3\times n} \to \R$ as a function on this $(3n-4)$--dimensional sphere embedded in $\R^{3 \times n}$. Since spheres are compact 
manifolds without boundary and a global optimization-based method is essentially 
the gradient flow of a function on this sphere, we can now unleash the power of geometric analysis to study this method.

By the above argument, we have not only gained a different angle from which we can view and study quality measures, but an entirely new approach to constructing optimization-based quality measures. In fact, any function on a $(3n-4)$--sphere embedded in $\R^{3\times n}$ representing all meshes up to translation and scaling gives rise to a translation- and scaling-invariant quality measure, whose gradient flow will give a global optimization-based smoothing method through the method of steepest descent.

\subsection{The Mean Volume Function}\label{sec_mean_vol}

The signed volume of a tetrahedron with vertex coordinates $x=(x_1,\ldots,x_4) \in \R^{3\times 4}$ is given by
\begin{equation}\label{eq_vol}
\vol(x) = \frac{1}{6}((x_2-x_1)\times (x_3-x_1))\cdot (x_4-x_1)\;.
\end{equation}
The orientation of the tetrahedron and therefore the sign of the volume function is determined by the order of vertices. Notice, that it is a well-defined function for all meshes with the structure of a tetrahedron. This function naturally extends to convex polyhedra by first triangulating them, in particular to hexahedra, prisms and pyramids. By averaging over all possible triangulations, we can extend the volume function to a mean volume function on all meshes with the edge structure of a convex polyhedron and again denote it by $\vol$.

There are alternative definitions for this mean volume function, but this is the most convenient one for our purpose. For example, in the case of a pyramid mesh $e=(v_1,\ldots,v_5)$, where $v_5$ is the apex, we only have the two different triangulations depicted in Fig.\ \ref{pyramid_triang}. For a detailed discussion of triangulations and the mean volume function see \cite[Sect.\ 5]{VartziotisHimpel2013}.
\begin{figure}[ht]
\def\svgwidth{7cm}
\centering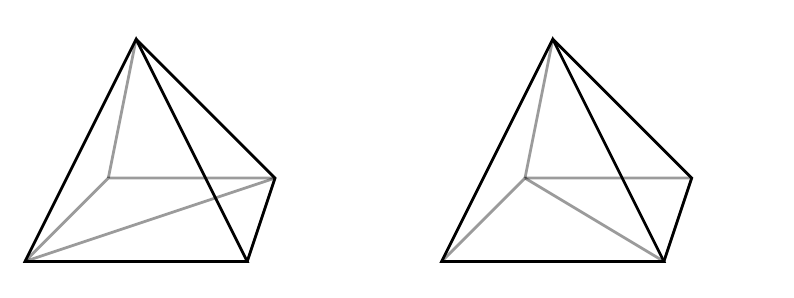\caption{The two different triangulations of a pyramid\label{pyramid_triang}}
\end{figure}

Clearly, the mean volume function is translation-invariant, but not scaling-invariant. However, if we consider its restriction to a $(3n-4)$--sphere representing all meshes with the structure of a fixed convex polyhedron with $n$ vertices up to translation and scaling, this yields a scaling- and translation-invariant quality measure $q_e$ for polyhedra, which is essentially maximized by regular polyhedra \cite{VartziotisHimpel2013}. To be more concrete, the quotient map by translation and scaling viewed as a projection to the submanifold
\[
 N\coloneqq\{x \in M \mid \|x\|=1 \text{ and } x_*\coloneqq   \sum_{i=1}^n x_i = \vec{0}\} \subset M
\]
is given by
\begin{align*}
\pi\co \R^{3\times n}\backslash\{(x_0,\ldots,x_0)\mid x_0\in \R^3\} & \to N \;,\\ 
(x_1,\ldots,x_n) & \mapsto \frac{(x_1-x_*,\ldots,x_{n}-x_*)}{\|(x_1-x_{*},\ldots,x_{n}-x_*)\|}\;,
\end{align*}
and the diffeomorphism from $N$ to $S^{3n-4} \subset \R^{3(n-1)} \cong \R^{3\times (n-1)}$ is given by
\begin{align*}
(x_1,\ldots,x_n) & \mapsto \frac{1}{\|(x_1,\ldots,x_{n-1})\|}(x_1,\ldots,x_{n-1})\;.
\end{align*}
Similarly, let $N_e \subset \R^{n_e}$ be the sphere corresponding to $e$.

\subsection{The Mean Ratio Metric}

Algebraic mesh quality measures had been introduced by \cite{Knupp2001}, and have been influential for smoothing, optimization and edge swapping techniques. These quality measures also fit nicely in our framework. Let us consider the mean ratio quality measure. It is given by
\begin{equation}
  \label{eq_meanratio}
  q(x_e):=\frac{3\det (S)^{2/3}}{\|\,S\|_F^2}\,,
\end{equation}
with $\|\,S\|_F:=\sqrt{\tr(S^{\mathrm{t}}S)}$ denoting the
Frobenius norm of the matrix $S:=DW^{-1}$. Here $D$ represents the
difference matrix given by 
\begin{equation}
  \label{eq:diffmatrix}
   D:=(x_2-x_1,x_3-x_1,x_4-x_1) \quad \text{for }\det(x_e)>0
\end{equation}
and $W$ denotes the difference matrix of a reference tetrahedron. It holds
that $q(x_e)\in[0,1]$, where very small values indicate nearly degenerated
elements and larger values elements of good quality. In particular it holds
that $q(x_e)=1$, if $x_e$ is regular.

When the desired shape is a regular tetrahedron, the Frobenius norm is simply a way to make the volume scaling-invariant while preserving the translation- and rotational invariance. The main reason for this choice is that this matrix norm is easily implemented and efficient, and has therefore been used in quality measures. Now, $\det(S) = \det(D)\det(W^{-1}) = \vol(x_e)/w$, where $w = \det(W)$ is constant. Furthermore $\det(D) = 6 \vol(x_e)$, when $e$ is a tetrahedron. We can rewrite
\[
 C \left(q(x_e)\right)^{3/2} =  \vol\left(\frac{1}{\|S\|_F}x_e\right) \quad \text{for some constant } C = C(W)> 0.
\]
The volume function restricted to the submanifold of $M$ diffeomorphic to a $(3n-4)$--sphere given by
\[
\{p \in M \mid x_*= \vec{0} \text{ and } x_e = \|S\|_F\} \subset M
\]
is therefore an equivalent way of describing the mean ratio quality measure.

With the results from \cite{VartziotisHimpel2013} in hand, we see, that \cite{VartziotisWipperSchwald2009} starts with the same volume function, but normalizes its gradient rather than the quality measure, so that it is scaling-invariant. Furthermore, we have proven, that this scaling-invariant gradient optimizes the quality measure given by a certain normalized volume. In effect, the aim of both methods is the optimization with respect to the volume function. This is a heuristic explanation, why Mesquite and GETMe give meshes of similar quality.

\section{GETMe}\label{sec_getme}

The original GETMe algorithm for tetrahedral meshes \cite{VartziotisWipperSchwald2009} is based on shifting vertices by opposing face normals. Let us give a few details in order to describe how this algorithm fits into the mathematical framework. 

\subsection{Scaling-Invariance}

For a tetrahedral volume element $e=(v_1,\ldots,v_4)$ consider the vector
\[
 X_e = \begin{bmatrix}
        X_{e,1}\\
        X_{e,2}\\
        X_{e,3}\\
        X_{e,4}
       \end{bmatrix}
       \coloneqq
\begin{bmatrix}
(x_{4} -x_{3}) \times (x_{3}-x_{2})\\
(x_{4} -x_{1}) \times (x_{1}-x_{3})\\
(x_{4} -x_{2}) \times (x_{2}-x_{1})\\
(x_{1} -x_{2}) \times (x_{2}-x_{3})\\
\end{bmatrix} \in \R^{3\cdot 4}\cong \R^{3\times 4}\;,
\]
where we identify the column vector of (column) vector entries with the row vector of the same vector entries. The GETMe algorithm without the imposed scaling-invariance applied only to this one volume element $e$ transforms its coordinates via
\begin{align*}
  \R^{3\times 4} &\to \R^{3\times 4}\;,\\
 x_{e} &\mapsto x'_{e}= x_e + \sigma X_e
\end{align*}
for some parameter $\sigma >0$ independent of $e$. The tetrahedral GETMe smoothing introduced in \cite{VartziotisWipperSchwald2009} applied to an entire mesh then shifts $x_i$ for $i=1,\ldots,n$ by 
averaging all (vector) coordinates of $X_e$ over all $e$ containing $v_i$.  More precisely, if $V = (v_1,\ldots,v_n)$ are the vertices of a mesh with coordinates $x$ and $e = (w_1,\ldots,w_4)$ is a tetrahedral volume element within the mesh, then let us introduce the notation
\begin{align*}
  \phi_e\co \quad \R^{3 \times 4} & \hookrightarrow \R^{n\times 4}\;,\\
  (y_1,\ldots,y_4) & \mapsto (x_1,\ldots,x_n), \text{ where}\begin{cases}
                      x_i=y_j & \text{if } v_i = w_j\;,\\
                      x_i=\vec{0} & \text{else}\;,
                     \end{cases}
\end{align*}
and let $\lambda_i$, $i=1,\ldots,n$ be the number of volume elements containing $v_i$. If we set
\[
 X \coloneqq  \diag(\lambda_1^{-1},\ldots,\lambda_n^{-1})\sum_{e \in E}\phi_e(X_e)\;,
\]
then the original GETMe transformation is given by
\begin{align*}
  \R^{3\times n} &\to \R^{3\times n}\;,\\
x &\mapsto x' = x + \sigma X\;.
\end{align*}
However, in order to make it scaling-invariant, we need to modify $X$. We have arrived at a subtle issue, where our  solution in \cite{VartziotisHimpel2013} and the one in \cite{VartziotisWipperSchwald2009} differ, and which we will encounter again in Sect.\ \ref{sec_scaling}. This can only be appreciated when considering GETMe from a mathematical rather than a heuristic point of view. In praxis it does not make much of a difference, how we make the algorithm scaling-invariant, but there is a preferred way within our mathematical framework. In \cite{VartziotisWipperSchwald2009} each face normal in $X_{e,j}$ was divided by the square root of its norm $\|X_{e,j}\|$, which seemed to work very well for all practical purposes. In the language of our mathematical framework on the other hand, the vector \[X\in \R^{3\times n}\] determines the dynamical behavior on the $(3n-4)$--sphere rather than the vectors $X_{e,j} \in \R^3$ individually, which is why we would prefer not to change the direction of $X$. Therefore,
 we divide $X$ by the square root of its norm $\|X\|$, which results in 
a scaling of $X_e$ depending on the size of the volume element $e$.
In summary, if we define the normalization function
\begin{align*}
 \Psi\co \R^{3\times n} &\to \R^{3\times n}\;,\\
 X &\to \begin{cases}\frac{1}{\sqrt{\|X\|}} X & \text{if } X\neq \vec 0\\
         0 & \text{if } X=\vec 0\;,
        \end{cases}
\end{align*}
then we consider the transformation $T_\sigma$ where
\begin{align*}
T_\sigma\co \R^{3\times n} &\to \R^{3\times n}\;,\\
 x&\mapsto x' = x + \sigma \, \Psi(X)
 \end{align*}
as our variation of the original GETMe smoothing algorithm \cite{VartziotisWipperSchwald2009}.

Let us dwell a little longer on this issue. If we take a closer look, we choose a scaling different from \cite{VartziotisWipperSchwald2009} twice:
\begin{enumerate}
 \item in the normalization for each individual volume element, and
 \item in the averaging procedure over all volume elements for the mesh.
\end{enumerate}
From a global point of view we are closer to the initially conceived GETMe algorithm for tetrahedral meshes without the imposed scaling-invariance than we have been in \cite{VartziotisWipperSchwald2009}, because the vector $\Psi(X)$ points in the same direction as $X$. As we will see in Sect.\ \ref{sec_concrete_quality}, this entirely rigorous approach to the algorithm presented \cite{VartziotisWipperSchwald2009} is not very useful, because the latter includes other scaling methods and additional control mechanisms in order to exhibit such favorable behavior.

\subsection{Face Normals}

As we have seen, face normals play an essential role in the tetrahedral GETMe approach. By the face normal of an oriented triangle we simply mean the cross product of two edge vector compatible with the orientation. In order to simplify notation, let us define face normals for arbitrary polygonal curves in $\R^3$ by
\[
\nu(1, \ldots, k) \coloneqq x_{1} \times x_{2} + x_{2} \times x_{3} + \ldots + x_{{k-1}} \times x_{k} + x_{k} \times x_1\;.
\]
Clearly, if $(x_1,x_2,x_3)$ are the coordinates of a triangle in $\R^3$, then \[\nu(1,2,3) = (x_{1} -x_{2}) \times (x_{2}-x_{3})\] is the usual face normal considered in the previous section. In particular, we can write
\begin{equation}\label{eq_getme_tetra}
 X_e = \begin{bmatrix}
\nu(4,3,2)\\
\nu(4,1,3)\\
\nu(4,2,1)\\
\nu(1,2,3)
\end{bmatrix}
\text{ for a tetrahedral element }e=\text{\raisebox{-1cm}{\def\svgwidth{2.5cm}%% Creator: Inkscape inkscape 0.48.3.1, www.inkscape.org
%% PDF/EPS/PS + LaTeX output extension by Johan Engelen, 2010
%% Accompanies image file '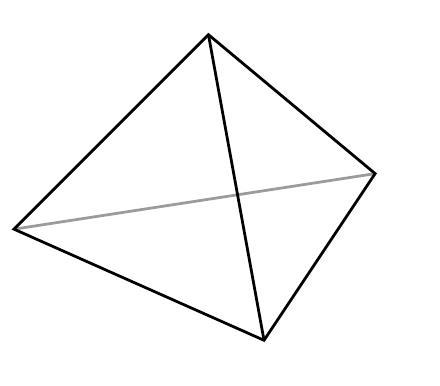' (pdf, eps, ps)
%%
%% To include the image in your LaTeX document, write
%%   \input{<filename>.pdf_tex}
%%  instead of
%%   \includegraphics{<filename>.pdf}
%% To scale the image, write
%%   \def\svgwidth{<desired width>}
%%   \input{<filename>.pdf_tex}
%%  instead of
%%   \includegraphics[width=<desired width>]{<filename>.pdf}
%%
%% Images with a different path to the parent latex file can
%% be accessed with the `import' package (which may need to be
%% installed) using
%%   \usepackage{import}
%% in the preamble, and then including the image with
%%   \import{<path to file>}{<filename>.pdf_tex}
%% Alternatively, one can specify
%%   \graphicspath{{<path to file>/}}
%% 
%% For more information, please see info/svg-inkscape on CTAN:
%%   http://tug.ctan.org/tex-archive/info/svg-inkscape
%%
\begingroup%
  \makeatletter%
  \providecommand\color[2][]{%
    \errmessage{(Inkscape) Color is used for the text in Inkscape, but the package 'color.sty' is not loaded}%
    \renewcommand\color[2][]{}%
  }%
  \providecommand\transparent[1]{%
    \errmessage{(Inkscape) Transparency is used (non-zero) for the text in Inkscape, but the package 'transparent.sty' is not loaded}%
    \renewcommand\transparent[1]{}%
  }%
  \providecommand\rotatebox[2]{#2}%
  \ifx\svgwidth\undefined%
    \setlength{\unitlength}{126.71829834bp}%
    \ifx\svgscale\undefined%
      \relax%
    \else%
      \setlength{\unitlength}{\unitlength * \real{\svgscale}}%
    \fi%
  \else%
    \setlength{\unitlength}{\svgwidth}%
  \fi%
  \global\let\svgwidth\undefined%
  \global\let\svgscale\undefined%
  \makeatother%
  \begin{picture}(1,0.87551156)%
    \put(0,0){\includegraphics[width=\unitlength]{tetrahedron.pdf}}%
    \put(-0.00419237,0.26444029){\color[rgb]{0,0,0}\makebox(0,0)[lb]{\smash{$v_1$}}}%
    \put(0.5414497,0.01191126){\color[rgb]{0,0,0}\makebox(0,0)[lb]{\smash{$v_2$}}}%
    \put(0.8526012,0.41776081){\color[rgb]{0,0,0}\makebox(0,0)[lb]{\smash{$v_3$}}}%
    \put(0.39714778,0.83713904){\color[rgb]{0,0,0}\makebox(0,0)[lb]{\smash{$v_4$}}}%
  \end{picture}%
\endgroup%
}.}
\end{equation}
For polygonal curves in a hyperplane of $\R^3$, $\nu$ is a natural generalization of the face normal from triangles to the planar surface enclosed by the curve. The direction of $\nu$ can be determined by the right-hand (grip) rule. For arbitrary polygonal curves in $\R^3$, $\nu$ corresponds to the sum of the face normals of an arbitrary triangulated surface, whose boundary is the curve. For more details on face normals, please consult \cite[Sect.\ 3]{VartziotisHimpel2013}.

\section{An Efficient and Optimization-Based Approach}\label{sec_optimization_approach}

Often, quality effective smoothing methods are developed using global optimization. In fact, global optimization-based methods yield meshes of superior quality. It is rather fortunate, that we are able to find our way back from GETMe to the basics of global optimization via reverse engineering, because the resulting algorithms will combine the best of both worlds, being not only very efficient, but also effective. We will also discuss how previously tested algorithms fit into this framework.

\subsection{Suitable Mesh Quality Functions}\label{sec_concrete_quality}

As we have mentioned in Sect.\ \ref{sec_quality}, any function on $N$ can be viewed and used as a global mesh quality function, whose gradient flow will find local maximums. There are lots of different ways of combining element quality functions in order to get a global one for meshes. Our immediate goal is to find potentially useful measures, whose gradients preserve the GETMe characteristic.

The most naive option is to add all the element quality functions up. A straight-forward computation as in \cite[Sect.\ 3]{VartziotisHimpel2013} shows, that the mean volume for meshes defined in Sect.\ \ref{sec_mean_vol} satisfies
\[
  6 \nabla \vol = X\;,
\]
where $\nabla \vol$ is the gradient vector field of $\vol$. The method of steepest descent applied to six times the negative mean volume clearly yields the transformation \[x' = x + \sigma X \quad \text{for }\sigma \text{ sufficiently small}.\] Up to the averaging process at each node, this is the original GETMe algorithm \cite{VartziotisWipperSchwald2009} without the imposed scaling-invariance, relaxation, weighted averages and any further control mechanisms. This might provide some confidence, that this would yield an efficient and effective GETMe algorithm. Instead, this uncovers a big drawback of the core GETMe transformation: Vertices, which are not on the boundary surfaces are fixed when shifting via $X$, because the mean volume is independent of the location of the inner vertices! More concretely, consider a regular tetrahedra with a single inner vertex and four smaller tetrahedra whose edges connect the outer vertices $v_1,\ldots,v_4$ with the inner vertex $v_5$ as in Fig.\ \ref{fig_fixedpoint}.
\begin{figure}[ht]
\def\svgwidth{4cm}
\centering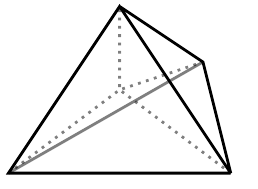\caption{The triangulation of a regular tetrahedra with one inner vertex\label{fig_fixedpoint}}
\end{figure}

The inner point is fixed by the flow of $X$ and and stays fixed after averaging the corresponding face normals, no matter where it is located, because the opposite face normals add up to the trivial vector. One could argue, that by rescaling the regular tetrahedra, the inner point moves towards the center. However, if we have more than one inner point, all of them would move towards the center. Therefore, the correct way to rescale the resulting mesh is by projecting the boundary vertices onto the original boundary surface, even though this means, that the inner vertices are fixed. It is therefore surprising, that the GETMe algorithm in \cite{VartziotisWipperSchwald2009} works as well as it does. In any case, a systematic approach might lead to a more stable and mathematically predictable algorithm without the need of control mechanisms.
 
Instead of simply using the mean volume as a global mesh quality function, the sum of the scaling-invariant element quality functions $q_e$ \[q(x) = \sum_{e\in E} q_e(x_e)\] given by first projecting $x_e$ to $N_e$ would be a more promising albeit more complicated approach, which we will not pursue at this point. However, there are other ways of combining the mean volumes to give potentially useful mesh quality functions. Let us consider the restriction of
\begin{equation}\label{eq_vol_measures}
 q_1(x) = \prod_{e\in E}\vol(x_e)^2 \quad \text{and} \quad q_2(x) = -\sum_{e\in E}\frac{1}{\vol(x_e)^2}
\end{equation}
to $N$. We compute
\[
\nabla q_1 = k_1(x)\sum_{e\in E}\frac{1}{\vol(x_e)} X_e  \quad \text{and} \quad \nabla q_2 = k_2(x)\sum_{e\in E}\frac{1}{\vol(x_e)^2} X_e\;,
\]
where
\[
k_1(x)= 2\prod_{e\in E}\vol(x_e)^2 \quad \text{and} \quad k_2(x) = 4\;.
\]
Obviously, vanishing (and negative) mean volume is problematic. For convenience, let us assume that all of our volume elements have positive mean volume. In particular, all valid volume elements should satisfy this mild assumption. For tetrahedra, positive mean volume is equivalent to them being valid. In particular, both quality measures ensure in the case of tetraehedra, that they stay valid under sufficiently small transformations using the gradient. However, note that we could define useful quality measures for all volume elements by shifting the volume to be positive. Since $N$ is compact, we could shift the volume by
\[
 c = \max_{e\in E}\max_{N_e} \vol\;.
\]

We observe, that $\nabla q_1$ and $\nabla q_2$ look similar, and therefore we expect them to behave similarly. Furthermore, in the above concrete example of the regular tetrahedron triangulated by 4 tetrahedra, the inner point is not fixed by the gradient flow. Instead, it moves away from the smaller (and more irregular) tetrahedra towards the center. In general, the resulting transformation seem to prefer volume element distributions, which are similar in size and at the same time as regular as possible. There are a lot of other conceivable mesh quality functions and combinations of them. In this paper we emphasize the mathematical framework rather than specific algorithms. However, preliminary tests in the two-dimensional analogue using the area instead of the volume show, that the gradient flow givs a powerful GETMe smoothing method.

\subsection{Scaling-Invariance}\label{sec_scaling}

Whatever the gradient flow does, it would be nice to have a scaling-invariant flow of some related vector field. This can be achieved just like we have done in \cite{VartziotisHimpel2013} by introducing an appropriate factor. The proof of the following result is entirely analogous to the proofs of \cite[Lemma 3.3 and Theorem 4.2]{VartziotisHimpel2013}, but it is more general.
\begin{theorem}\label{thm_dynamics}
 Let $\nabla_\pi$ denote the gradient with respect to the induced submanifold metric $\la\cdot,\cdot\ra_\pi$ on $N$. Let $X$ be a homogeneous vector field of degree $d$ and $q$ a quality measure satisfying
 \[
  d(\pi)(X_x) = c_x\nabla_\pi q|_x \quad \text{and } \quad c_x>0\quad \text{for }x\in N.
 \]
Consider $\Psi\co \R^{3\times n} \to \R^{3\times n}$ given by
 \[\Psi(X) = \|X\|^{(1-d)/d} X\;.\]
Then the vector fields on $N$ given by
 \begin{equation}\label{eq_vectorfields}
 Y \coloneqq D(\pi)(X) \quad \text{and}\quad \tilde Y \coloneqq D(\pi)(\Psi(X))
 \end{equation}
start and end at singularities of $Y$, and the vector field $\tilde Y$ on $N$ corresponds to $\Psi(\nabla q)$ on the quotient manifold $\R^{3\times n}\backslash\{(x_0,\ldots,x_0)\mid x_0\in \R^3\}/\sim$.
\end{theorem}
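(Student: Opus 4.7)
The plan is to establish the three assertions of the theorem in sequence: (i) $\Psi(X)$ descends to the quotient via degree-$1$ homogeneity, (ii) every integral curve of $\tilde Y$ (and of $Y$) limits to singularities of $Y$ via a Lyapunov argument on the compact manifold $N$, and (iii) the descended vector field agrees with $\Psi(\nabla q)$ under the natural identification.

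First I would verify that $\Psi(X)$ is homogeneous of degree $1$. For $t>0$ and $X_x\neq \vec{0}$, the degree-$d$ homogeneity $X_{tx}=t^d X_x$ gives $\|X_{tx}\|=t^d\|X_x\|$, so
\[
\Psi(X)_{tx} = \bigl(t^d\|X_x\|\bigr)^{(1-d)/d}\,t^d X_x = t\,\Psi(X)_x.
\]
Together with translation-invariance (which is inherited from the fact that $D\pi(X)$ is a vector field on $N$), this is precisely what makes $\Psi(X)$ descend to a well-defined vector field on the quotient $\R^{3\times n}\backslash\{(x_0,\ldots,x_0)\}/\sim$, and hence $\tilde Y=D\pi(\Psi(X))$ to a well-defined vector field on $N$.

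Next I would compare $\tilde Y$ and $Y$ pointwise. Linearity of $D\pi$ gives $\tilde Y_x = \|X_x\|^{(1-d)/d}\,Y_x$ on $N\setminus\{X=\vec{0}\}$, so the zero sets of $\tilde Y$ and $Y$ coincide and, by the hypothesis $Y_x=c_x\nabla_\pi q|_x$ with $c_x>0$, are precisely the critical points of $q|_N$. Since $N$ is compact without boundary, $q|_N$ attains its extrema and serves as a strict Lyapunov function: along any non-stationary integral curve $\gamma$ of $Y$ or of $\tilde Y$, the derivative $\tfrac{d}{dt}q(\gamma(t))=\la\dot\gamma,\nabla_\pi q\ra_\pi$ is a strictly positive multiple of $\|\nabla_\pi q|_\gamma\|_\pi^2$, because $\dot\gamma$ is itself a positive multiple of $\nabla_\pi q|_\gamma$. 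Standard $\omega$-limit set theory on a compact manifold then forces both the $\alpha$- and $\omega$-limits of each trajectory to lie in the critical set of $q|_N$, i.e.\ in the singular set of $Y$.

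Finally, for the correspondence with $\Psi(\nabla q)$, I would use the diffeomorphism between the quotient manifold and $N$ induced by $\pi$, under which the gradient $\nabla q$ on the quotient is transported to $\nabla_\pi q$ on $N$ equipped with the induced submanifold metric $\la\cdot,\cdot\ra_\pi$. The hypothesis says that $X$ descends on the quotient to a positive scalar multiple of $\nabla q$, and because $\Psi$ depends only on the direction and norm of its argument, $\Psi$ applied to the descent of $X$ agrees with the descent of $\Psi(X)=\tilde Y$. The main obstacle will be making this last commutativity precise: $\Psi$ is defined on the ambient $\R^{3\times n}$, so one must check, using the degree-$1$ homogeneity established in the first step together with the chain rule for the quotient projection, that the ambient normalization $\Psi$ and its intrinsic counterpart on the quotient produce the same vector field. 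This is exactly the step that parallels \cite[Lemma~3.3 and Theorem~4.2]{VartziotisHimpel2013}, the only novelty being that the exponent $(1-d)/d$ must track a general degree $d$ rather than the specific value used there.
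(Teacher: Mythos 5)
The paper does not actually prove this theorem---it defers to the ``entirely analogous'' proofs of Lemma~3.3 and Theorem~4.2 in \cite{VartziotisHimpel2013}---so there is no in-text argument to compare against, but your reconstruction (degree-$1$ homogeneity of $\Psi(X)$ so that it descends to the quotient, the gradient-like/Lyapunov argument on the compact manifold $N$ for the limit-set claim, and the commutation of the normalization with $D\pi$ for the correspondence with $\Psi(\nabla q)$) is precisely the argument that analogy yields, and it is sound at the paper's level of rigor. The one caveat, which you inherit from the statement rather than introduce, is that for negative degree $d$ (the case $d=-(3i+1)$ to which the paper applies the theorem) one has $\|\Psi(X)\|=\|X\|^{1/d}\to\infty$ as $X\to\vec{0}$, so $\tilde Y$ is singular rather than zero at zeros of $X$, and your claim that the zero sets of $Y$ and $\tilde Y$ coincide needs either the additional observation that $X$ does not vanish on $N$ or a separate treatment of such points.
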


In particular, the above theorem applies to $q_i$ defined in Sect.\ \ref{sec_concrete_quality}, where $c_x=(k_i(x))^{-1}$ and $X_i = c_x\nabla q_i|_x$ is homogeneous of degree $-(3i+1)$.

In \cite[Theorem 4.2]{VartziotisHimpel2013} we have been content with results about the dynamical behavior for volume elements. If we look more closely, our proof implicitly shows something else: on the mesh given by the single volume element, the GETMe algorithm with our normalization is optimization-based. The same is true above. The vector fields $Y$ and $\tilde Y$ yield optimization-based GETMe transformations.

\subsection{Global Optimization-based algorithm}

Since there are many quality functions which might yield global optimization-based algorithms, let us consider an arbitrary homogeneous degree $d$ vector field $X$ and a quality measure $q$ satisfying 
\[
  \D(\pi)(X_x) = c_x\nabla_\pi q|_x  \quad \text{and} \quad c_x>0 \quad \text{for }x\in N\;.
 \]
 In particular the following theorem applies to $q_i$ in Sect.\ \ref{sec_concrete_quality} and similar algebraic quality measures.

\begin{theorem}\label{thm_main}
 The GETMe operator given by
 \begin{align*}
   T_\sigma\co N &\to N\;,\\
   x&\mapsto x' = \pi(x + \sigma \, \Psi(X)) \quad \text{for } X=\sum_{e\in E} X_e\;.
 \end{align*}
 is a global optimization-based smoothing method for mixed-volume meshes as long as $\sigma>0$ sufficiently small, which is invariant under translation and scaling. More precisely, for the mesh quality function $q$ and for any $x \in N$ we have 
 \[
 q(T_\sigma(x))>q(x) \quad \text{for }\sigma>0 \text{ sufficiently small.}
 \]
\end{theorem}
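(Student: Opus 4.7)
The plan is to verify three things in succession: that $T_\sigma$ is well-defined as a self-map of $N$, that it is invariant under translation and scaling (so indeed descends to $N$), and that $q \circ T_\sigma(x) > q(x)$ for $x$ noncritical and $\sigma>0$ sufficiently small. Almost all of the nontrivial work is already packaged in Theorem \ref{thm_dynamics}, which tells us that $\tilde Y = D(\pi)(\Psi(X))$ is the vector field on $N$ representing $\Psi(\nabla q)$ on the quotient. The remaining argument is a clean first-order Taylor expansion.

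First I would handle invariance. Because $X$ is homogeneous of degree $d$, a direct computation gives
\[
 \Psi(X)_{\lambda x} = \|X_{\lambda x}\|^{(1-d)/d}X_{\lambda x} = \lambda^{d(1-d)/d+d}\,\Psi(X)_x = \lambda\,\Psi(X)_x
\]
for $\lambda>0$, so $\Psi(X)$ is homogeneous of degree $1$. Translation invariance is inherited from $X$, which in all cases of interest (the $X_e$ built from cross products of edge vectors) depends only on coordinate differences. Consequently the map $x \mapsto x + \sigma\,\Psi(X)$ commutes with the equivalence relation generated by translation and positive scaling, so postcomposing with $\pi$ yields a well-defined operator on $N$. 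For $\sigma>0$ small, the image lies outside the degenerate locus $\{(x_0,\ldots,x_0)\}$ since $x\in N$ already satisfies $x_*=\vec 0$ and $\|x\|=1$, so $\pi$ is smooth at $x+\sigma\Psi(X)$.

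For the strict monotonicity, fix $x \in N$ with $\nabla_\pi q|_x\neq 0$. Since $\pi$ restricts to the identity on $N$, a Taylor expansion gives
\[
 T_\sigma(x) = x + \sigma\,D(\pi)|_x(\Psi(X)) + O(\sigma^2) = x + \sigma\,\tilde Y_x + O(\sigma^2),
\]
and composing with $q$ yields
\[
 q(T_\sigma(x)) = q(x) + \sigma\,\la \nabla_\pi q|_x,\tilde Y_x\ra_\pi + O(\sigma^2).
\]
By linearity of $D(\pi)$ together with the hypothesis $D(\pi)(X_x)=c_x\nabla_\pi q|_x$, we have
\[
 \tilde Y_x = \|X\|^{(1-d)/d}\,D(\pi)(X_x) = c_x\|X\|^{(1-d)/d}\,\nabla_\pi q|_x,
\]
so the first-order coefficient equals $c_x\|X\|^{(1-d)/d}\|\nabla_\pi q|_x\|_\pi^2$, which is strictly positive by the sign assumption on $c_x$. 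Choosing $\sigma>0$ small enough that this linear term dominates the remainder gives $q(T_\sigma(x))>q(x)$.

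The main obstacle is a matter of interpretation at the fixed points of the flow: where $\nabla_\pi q|_x=0$, the strict inequality in the theorem statement cannot hold, and one should read the conclusion as applying on the open complement of the critical set (critical points are fixed by $T_\sigma$ and correspond exactly to terminated iterations of any optimization-based method). A minor technical point is uniformity of the Taylor remainder in $\sigma$: this follows from smoothness of $\pi$ away from the degenerate locus and from the fact that $\|X\|$ is bounded away from both $0$ and $\infty$ on any compact subset of the noncritical part of $N$, so the bound on $\sigma$ can be chosen locally uniformly in $x$.
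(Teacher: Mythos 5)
Your proposal is correct and follows essentially the same route as the paper: both arguments reduce to computing $\left.\frac{d}{d\sigma}q(T_\sigma(x))\right|_{\sigma=0}$ via the chain rule and the hypothesis $\D\pi(X_x)=c_x\nabla_\pi q|_x$, obtaining a first-order coefficient equal to $c_x\|X\|^{(1-d)/d}\|\nabla_\pi q|_x\|_\pi^2$ (the paper phrases this through the flow line $\gamma$ of $\D\pi(X)$, but only uses $\dot\gamma(0)=\D\pi(X_x)$, so this is the same computation). Your caveat about critical points is a legitimate refinement rather than a defect: the paper's final step $c_x^{-1}\|\dot\gamma(0)\|_\iota^2>0$ silently assumes $\D\pi(X_x)\neq 0$, and the strict inequality indeed fails at critical points of $q|_N$, exactly as you note.
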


\begin{proof}
Fix $x \in N$, and let $x_\sigma = x + \sigma\,  \Psi(X) \in \R^{3\times n}$. Since $\pi(x) = x$ and
\[
\frac{d}{d\sigma}(x_\sigma) = \Psi(X) = \|X\|^{(d-1)/d} X\;,
\]
the chain rule gives
\[
\left.\frac{d}{d\sigma}q(T_\sigma(x))\right|_{\sigma=0}  = \left. \frac{d}{d\sigma}(q(\pi(x_\sigma)))\right|_{\sigma=0} =  \|X\|^{(d-1)/d} \D q_x \circ \D\pi_x (X_x)\;.
\]
Now consider the flow line $\gamma$ of $\D\pi(X)$ on $N$ starting at $x$. The path $\gamma(t)\in N$ therefore satisfies the initial value problem
\[\gamma(0)=x \quad\text{and}\quad  \dot\gamma(t)=\D\pi (X_{\gamma(t)})\;. \]
Then
\[
 \left.\frac{d}{d\sigma}q(T_\sigma(x))\right|_{\sigma=0}  = \|X_x\|^{(d-1)/d} \D q ( \dot\gamma(0)) = \|X_x\|^{(d-1)/d} \frac{d}{dt}(q\circ\gamma(t))|_{t=0}\;.
\]
Fig.\ \ref{fig_proof} shows the flow line $\gamma$, the linear path in the direction of the vector field $\Psi(X)$ and its projection to the sphere $N$ via $\pi$.
\begin{figure}[ht]
\def\svgwidth{4cm}
\centering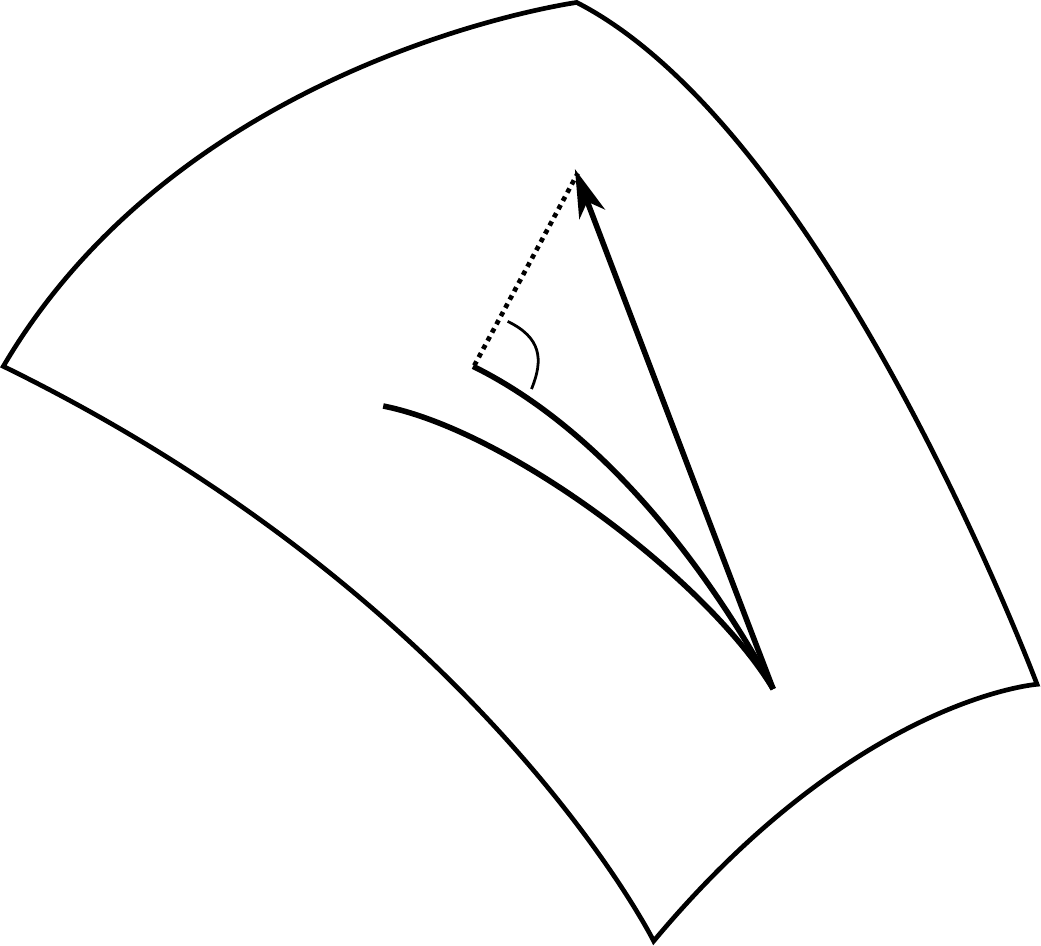\caption{The different players in the proof\label{fig_proof}}
\end{figure}
Since by assumption we have $\D\pi( X_x) = c_x\nabla_\pi(q|_{N})_x\;,$ we get
\begin{align*}
\|X_x\|^{(1-d)/d}&\cdot\left.\frac{d}{d\sigma}q(T_\sigma(x))\right|_{\sigma=0}  = \left.\frac{d}{dt}(q \circ \gamma(t))\right|_{t=0} \\
& =  \langle \nabla_\pi(q|_{N})_x, \dot\gamma(0)\rangle_{\iota}
 = \langle c_x^{-1}\D\pi (X_{\gamma(0)}), \dot\gamma(0)\rangle_{\iota} = c_x^{-1}\| \dot \gamma(0) \|_{\iota}^2 >0\;.
\end{align*}
Given $x\in N$, we therefore have that
\[
q(T_\sigma(x))>q(x) \quad \text{for }\sigma>0 \text{ sufficiently small.}\rqedhere
\]
\end{proof}

\section{Generalizations}\label{sec_generalizations}

As we have discussed, the GETMe algorithm for volume meshes was originally conceived for tetrahedral meshes \cite{VartziotisWipperSchwald2009} and later generalized to mixed volume meshes \cite{VartziotisWipper2011Hex,VartziotisWipper2011Mixed} via dual polyhedra. Furthermore, it is standard in applications to modify smoothing algorithms, so that they preserve the shape or features like corners and edges of a model. Let us outline, how such generalizations fit into our mathematical framework. Lastly, we have focused only on volume meshes, but with the same approach we can generalize the smoothing methods to arbitrary dimensions, in particular surface meshes. This allows us to construct a GETMe algorithm optimizing the isoperimetric quotient as a quality metric.

\subsection{Mixed Volume Meshes}\label{sec_mixed}

The most poignant feature of our systematic approach is the natural compatibility of the GETMe algorithm for different volume elements by way of the mean volume. Let us give an explicit description of a method globally optimizing this mixed mesh quality. As we have already seen in \eqref{eq_getme_tetra}, the vector $X_e$ for tetrahedra $e$ has an elegant description in terms of face normals. Not surprisingly, the same is true for hexahedra, prisms and pyramids. We compute the following vectors
\begin{align*}
X_e &= 
\frac{1}{2}\begin{bmatrix}
\nu(5,4,2) + \nu(5,4,3,2)\\
\nu(5,1,3) + \nu(5,1,4,3)\\
\nu(5,2,4) + \nu(5,2,1,4)\\
\nu(5,3,1) + \nu(5,3,2,1)\\
2 \cdot \nu(1,2,3,4)\end{bmatrix}\text{ for the pyramid }e=\text{\raisebox{-1cm}{\def\svgwidth{2.5cm}%% Creator: Inkscape inkscape 0.48.3.1, www.inkscape.org
%% PDF/EPS/PS + LaTeX output extension by Johan Engelen, 2010
%% Accompanies image file '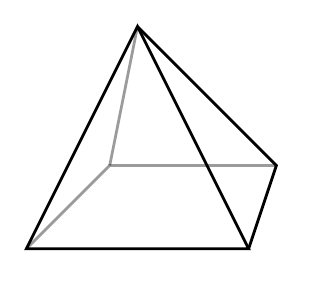' (pdf, eps, ps)
%%
%% To include the image in your LaTeX document, write
%%   \input{<filename>.pdf_tex}
%%  instead of
%%   \includegraphics{<filename>.pdf}
%% To scale the image, write
%%   \def\svgwidth{<desired width>}
%%   \input{<filename>.pdf_tex}
%%  instead of
%%   \includegraphics[width=<desired width>]{<filename>.pdf}
%%
%% Images with a different path to the parent latex file can
%% be accessed with the `import' package (which may need to be
%% installed) using
%%   \usepackage{import}
%% in the preamble, and then including the image with
%%   \import{<path to file>}{<filename>.pdf_tex}
%% Alternatively, one can specify
%%   \graphicspath{{<path to file>/}}
%% 
%% For more information, please see info/svg-inkscape on CTAN:
%%   http://tug.ctan.org/tex-archive/info/svg-inkscape
%%
\begingroup%
  \makeatletter%
  \providecommand\color[2][]{%
    \errmessage{(Inkscape) Color is used for the text in Inkscape, but the package 'color.sty' is not loaded}%
    \renewcommand\color[2][]{}%
  }%
  \providecommand\transparent[1]{%
    \errmessage{(Inkscape) Transparency is used (non-zero) for the text in Inkscape, but the package 'transparent.sty' is not loaded}%
    \renewcommand\transparent[1]{}%
  }%
  \providecommand\rotatebox[2]{#2}%
  \ifx\svgwidth\undefined%
    \setlength{\unitlength}{94.82192383bp}%
    \ifx\svgscale\undefined%
      \relax%
    \else%
      \setlength{\unitlength}{\unitlength * \real{\svgscale}}%
    \fi%
  \else%
    \setlength{\unitlength}{\svgwidth}%
  \fi%
  \global\let\svgwidth\undefined%
  \global\let\svgscale\undefined%
  \makeatother%
  \begin{picture}(1,0.85190115)%
    \put(0,0){\includegraphics[width=\unitlength]{pyramid.pdf}}%
    \put(-0.00420196,0.0119385){\color[rgb]{0,0,0}\makebox(0,0)[lb]{\smash{$v_1$}}}%
    \put(0.71293178,0.0119385){\color[rgb]{0,0,0}\makebox(0,0)[lb]{\smash{$v_2$}}}%
    \put(0.85226419,0.29444945){\color[rgb]{0,0,0}\makebox(0,0)[lb]{\smash{$v_3$}}}%
    \put(0.32218905,0.25865482){\color[rgb]{0,0,0}\makebox(0,0)[lb]{\smash{$\cg v_4$}}}%
    \put(0.37545708,0.8134409){\color[rgb]{0,0,0}\makebox(0,0)[lb]{\smash{$v_5$}}}%
  \end{picture}%
\endgroup%
},}\\
X_e &= \displaystyle\frac{1}{2}\begin{bmatrix}
\nu(3,2,4) + \nu(2,5,4,6,3)\\
\nu(1,3,5) + \nu(3,6,5,4,1)\\
\nu(2,1,6) + \nu(1,4,6,5,2)\\
\nu(5,6,1) + \nu(6,3,1,2,5)\\
\nu(6,4,2) + \nu(4,1,2,3,6)\\
\nu(4,5,3) + \nu(5,2,3,1,4)\\
\end{bmatrix}\text{ for the prism }e=\text{\raisebox{-1.3cm}{\def\svgwidth{2.5cm}%% Creator: Inkscape inkscape 0.48.3.1, www.inkscape.org
%% PDF/EPS/PS + LaTeX output extension by Johan Engelen, 2010
%% Accompanies image file '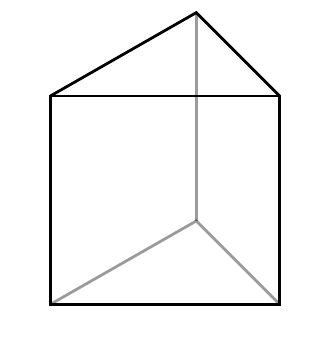' (pdf, eps, ps)
%%
%% To include the image in your LaTeX document, write
%%   \input{<filename>.pdf_tex}
%%  instead of
%%   \includegraphics{<filename>.pdf}
%% To scale the image, write
%%   \def\svgwidth{<desired width>}
%%   \input{<filename>.pdf_tex}
%%  instead of
%%   \includegraphics[width=<desired width>]{<filename>.pdf}
%%
%% Images with a different path to the parent latex file can
%% be accessed with the `import' package (which may need to be
%% installed) using
%%   \usepackage{import}
%% in the preamble, and then including the image with
%%   \import{<path to file>}{<filename>.pdf_tex}
%% Alternatively, one can specify
%%   \graphicspath{{<path to file>/}}
%% 
%% For more information, please see info/svg-inkscape on CTAN:
%%   http://tug.ctan.org/tex-archive/info/svg-inkscape
%%
\begingroup%
  \makeatletter%
  \providecommand\color[2][]{%
    \errmessage{(Inkscape) Color is used for the text in Inkscape, but the package 'color.sty' is not loaded}%
    \renewcommand\color[2][]{}%
  }%
  \providecommand\transparent[1]{%
    \errmessage{(Inkscape) Transparency is used (non-zero) for the text in Inkscape, but the package 'transparent.sty' is not loaded}%
    \renewcommand\transparent[1]{}%
  }%
  \providecommand\rotatebox[2]{#2}%
  \ifx\svgwidth\undefined%
    \setlength{\unitlength}{91.47348633bp}%
    \ifx\svgscale\undefined%
      \relax%
    \else%
      \setlength{\unitlength}{\unitlength * \real{\svgscale}}%
    \fi%
  \else%
    \setlength{\unitlength}{\svgwidth}%
  \fi%
  \global\let\svgwidth\undefined%
  \global\let\svgscale\undefined%
  \makeatother%
  \begin{picture}(1,1.07121017)%
    \put(0,0){\includegraphics[width=\unitlength]{prism.pdf}}%
    \put(0.02213596,0.05502523){\color[rgb]{0,0,0}\makebox(0,0)[lb]{\smash{$v_1$}}}%
    \put(0.83697969,0.00825034){\color[rgb]{0,0,0}\makebox(0,0)[lb]{\smash{$v_2$}}}%
    \put(0.47336148,0.39745657){\color[rgb]{0,0,0}\makebox(0,0)[lb]{\smash{$\cg v_3$}}}%
    \put(-0.00290385,0.7691415){\color[rgb]{0,0,0}\makebox(0,0)[lb]{\smash{$v_4$}}}%
    \put(0.89790416,0.76039526){\color[rgb]{0,0,0}\makebox(0,0)[lb]{\smash{$v_5$}}}%
    \put(0.46499156,1.04463143){\color[rgb]{0,0,0}\makebox(0,0)[lb]{\smash{$v_6$}}}%
  \end{picture}%
\endgroup%
},}\\
X_e &= \displaystyle\frac{1}{2}\begin{bmatrix}
 \nu(2,5,4) + \nu(6,5,8,4,3,2)\\
 \nu(3,6,1) + \nu(7,6,5,1,4,3)\\
 \nu(4,7,2) + \nu(8,7,6,2,1,4) \\
 \nu(1,8,3) + \nu(5,8,7,3,2,1) \\
 \nu(1,6,8) + \nu(6,7,8,4,1,2)\\
 \nu(2,7,5) + \nu(7,8,5,1,2,3)\\
 \nu(3,8,6) + \nu(8,5,6,2,3,4)\\
 \nu(4,5,7) + \nu(5,6,7,3,4,1)
\end{bmatrix} \text{ for the hexahedron }e=\text{\raisebox{-1cm}{\def\svgwidth{1.8cm}%% Creator: Inkscape inkscape 0.48.3.1, www.inkscape.org
%% PDF/EPS/PS + LaTeX output extension by Johan Engelen, 2010
%% Accompanies image file '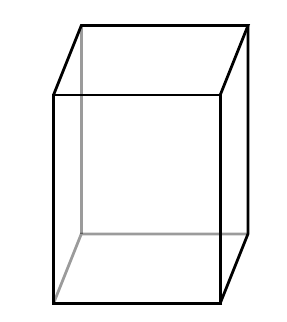' (pdf, eps, ps)
%%
%% To include the image in your LaTeX document, write
%%   \input{<filename>.pdf_tex}
%%  instead of
%%   \includegraphics{<filename>.pdf}
%% To scale the image, write
%%   \def\svgwidth{<desired width>}
%%   \input{<filename>.pdf_tex}
%%  instead of
%%   \includegraphics[width=<desired width>]{<filename>.pdf}
%%
%% Images with a different path to the parent latex file can
%% be accessed with the `import' package (which may need to be
%% installed) using
%%   \usepackage{import}
%% in the preamble, and then including the image with
%%   \import{<path to file>}{<filename>.pdf_tex}
%% Alternatively, one can specify
%%   \graphicspath{{<path to file>/}}
%% 
%% For more information, please see info/svg-inkscape on CTAN:
%%   http://tug.ctan.org/tex-archive/info/svg-inkscape
%%
\begingroup%
  \makeatletter%
  \providecommand\color[2][]{%
    \errmessage{(Inkscape) Color is used for the text in Inkscape, but the package 'color.sty' is not loaded}%
    \renewcommand\color[2][]{}%
  }%
  \providecommand\transparent[1]{%
    \errmessage{(Inkscape) Transparency is used (non-zero) for the text in Inkscape, but the package 'transparent.sty' is not loaded}%
    \renewcommand\transparent[1]{}%
  }%
  \providecommand\rotatebox[2]{#2}%
  \ifx\svgwidth\undefined%
    \setlength{\unitlength}{82.51834717bp}%
    \ifx\svgscale\undefined%
      \relax%
    \else%
      \setlength{\unitlength}{\unitlength * \real{\svgscale}}%
    \fi%
  \else%
    \setlength{\unitlength}{\svgwidth}%
  \fi%
  \global\let\svgwidth\undefined%
  \global\let\svgscale\undefined%
  \makeatother%
  \begin{picture}(1,1.12062651)%
    \put(0,0){\includegraphics[width=\unitlength]{hexa.pdf}}%
    \put(0.01834893,0.02046361){\color[rgb]{0,0,0}\makebox(0,0)[lb]{\smash{$v_1$}}}%
    \put(0.80653243,0.00914569){\color[rgb]{0,0,0}\makebox(0,0)[lb]{\smash{$v_2$}}}%
    \put(0.88682441,0.24724497){\color[rgb]{0,0,0}\makebox(0,0)[lb]{\smash{$v_3$}}}%
    \put(0.30513589,0.18318221){\color[rgb]{0,0,0}\makebox(0,0)[lb]{\smash{$\cg v_4$}}}%
    \put(-0.00321898,0.78473049){\color[rgb]{0,0,0}\makebox(0,0)[lb]{\smash{$v_5$}}}%
    \put(0.57462585,0.67646474){\color[rgb]{0,0,0}\makebox(0,0)[lb]{\smash{$v_6$}}}%
    \put(0.86547026,1.0313716){\color[rgb]{0,0,0}\makebox(0,0)[lb]{\smash{$v_7$}}}%
    \put(0.21797423,1.09116337){\color[rgb]{0,0,0}\makebox(0,0)[lb]{\smash{$v_8$}}}%
  \end{picture}%
\endgroup%
}.}
\end{align*}
For details we refer to \cite[Sects.\ 5 and 6]{VartziotisHimpel2013}.

\subsection{Isoperimetric quotient}\label{sec_iq}

If we want to construct a quality measure, which measures how round a given polyhedron is, we can try the isoperimetric quotient given by $36\pi$ times the volume squared divided by the surface area of the boundary cubed, where the constant normalizes the measure to be 1 for the unit sphere. Steiner \cite{steiner1841} conjectured that platonic solids maximize the isoperimetric quotient, which has been confirmed for all polyhedra but the icosahedron \cite{steinitz1927,fejes,bezdek2007}. Instead of the isoperimetric quotient we will consider its root, which allows us to naturally introduce a sign using the signed volume \eqref{eq_vol}. This generalizes to a function on the space of mixed volume meshes
\[
\iq(x_e) = 6\sqrt{\pi} \frac{\vol(x_e)}{\area(x_e)^{3/2}},
\]
where $\area$ is the mean surface area of the boundary of the polyhedron mesh $p$ defined analogously to the mean volume $\vol$. The function $\iq$ extends linearly (or by some other combining function $Q$) to all polyhedral meshes. A straight-forward computation of the gradient of $\iq$ gives us yet another GETMe global optimization-based smoothing algorithm. We expect this algorithm to be the most powerful, and by the very definition of the quality measure, it yields a scaling-invariant transformation.

As a side note, all tested meshes with an icosahedron structure converge to the regular icosahedron. A proof of this heuristic observation will not be easy and would imply Steiner's conjecture. Note that the isoperimetric volume element quality measure can be constructed using the mean volume function restricted to all polyhedra with a fixed boundary surface area, which 
exemplifies that different embeddings of the sphere of polyhedra induce different quality measures from the same volume function.

\subsection{Arbitrary dimensions}

The gradient of the volume has a particular nice and computationally efficient expression. Nevertheless, we get similar transformations in arbitrary dimensions. In particular, the area of a polygons is the two-dimensional analogue of the volume for polyhedra, and we also have an isoperimetric quotient for polygons. Therefore we can construct a GETMe transformation for triangulations of the plane and even of surface meshes. These will yield global optimization based smoothing methods for surface meshes. We have performed preliminary tests for the two-dimensional analogue of the measures in \eqref{eq_vol_measures}, which show that the resulting transformation based on area is a powerful GETMe smoothing for triangular meshes.

\subsection{Shape Preservation}

In order to preserve shape or features of a model, there exist standard projection techniques. If we consider the flow of $X$ rather than the discrete transformation method, we can constrain feature vertices to the submanifolds of $N$ given by the boundary surface, edges or corners. These submanifolds might not be smooth. Nevertheless, this approach fits nicely into the mathematical framework by considering the restriction of the volume function to these submanifolds of $N$.

\section{Conclusions}

We have described the mathematical framework and a systematic approach to global optimization-based GETMe smoothing methods for mixed volume meshes, which enables us to systematically prove, generalize and analyze properties of GETMe. These methods have the potential of being both runtime efficient and provably quality effective. We have given explicit constructions of potential smoothing methods, which will be analyzed numerically and theoretically in future publications.
\bibliographystyle{spmpsci}\bibliography{literature.bib}\end{document}